\newtheorem{theorem}{Theorem}[section]
\newtheorem{lemma}{Lemma}
\theoremstyle{definition}
\theoremstyle{remark}
\newtheorem*{remark}{Remark}
\begin{document}

\title{Consistency of Regularization for Scalar Fields}

\author{Doug Pickrell}
\email{pickrell@math.arizona.edu}

\begin{abstract} In two dimensional constructive quantum
field theory for scalar fields, it is necessary to regularize both
the action and the total (Gaussian) volume. In this paper we
consider the compatibility of these regularizations.
\end{abstract}

\maketitle

\section{Introduction}

Throughout this note, $m_0>0$, and $P:\mathbb R\to \mathbb R$ is a
polynomial which is bounded from below. If $\hat{\Sigma}$ is a
closed Riemannian surface, then $\mathcal D(\hat{\Sigma })$ and
$\mathcal D'(\hat{\Sigma })$ denote the spaces of smooth and
generalized functions (or distributions) on $\hat{\Sigma}$,
respectively, and a function $f$ is identified with the
distribution $fdA$, where $dA$ is the Riemannian area form. The
pairing of a function $f$ and a distribution $\phi$ is denoted by
$(f,\phi )$.

The  $P(\phi)_2$ quantum field theory (in finite volume),
corresponding to the parameters $(m_0,P)$, is essentially defined
by the Feynman-Kac measure on $\mathcal D'(\hat{\Sigma })$ given
by
\begin{equation}\label{FKmeasure}exp(-\int_{\hat{\Sigma}}:P((\delta_x,\phi
)):_{C_0}dA(x))det_{\zeta} (\Delta
+m_0^2)^{-1/2}d\phi_C,\end{equation} where $\Delta$ denotes the
(nonnegative) Laplacian on $\hat{\Sigma}$, $det_{\zeta}$ denotes
the zeta function determinant, $C=(\Delta +m_0^2)^{-1}$, $d\phi_C$
is the Gaussian measure having Fourier transform
$$\int_{\mathcal D'(\hat{\Sigma })}e^{-i(f,\phi )}d\phi_C=exp(-\frac 1
2(f,Cf)),\quad f\in \mathcal D(\hat{\Sigma }),$$
$\int_{\hat{\Sigma}}:P((\delta_x,\phi )):_{C_0}$ denotes point
splitting regularization with respect to
$$C_0(m_0;x,y)=-\frac1{2\pi}ln(m_0 d(x,y)),$$
and $d(x,y)$ is the distance between the points
$x,y\in\hat{\Sigma}$ (see \cite{GJ}, which we use as a general
reference, and section 4 of \cite{Pi2}).

It is necessary to regularize $\int P((\delta_x,\phi))dA(x)$,
because the measure $d\phi_C$ is not supported on ordinary
functions. The use of $C_0$, rather than $C$, and the inclusion of
the zeta function determinant, are essential to show that the
measures in (\ref{FKmeasure}) lead to a theory which is local in
the sense of Segal; see \cite{Pi2}. This leads to the consistency
question addressed in the following

\begin{theorem}\label{maintheorem}Given a closed Riemannian surface $\hat{\Sigma}$, if
$m^2=m_1^2+m_0^2$, then
$$exp(-\frac 12m_1^2\int_{\hat{\Sigma}}:(\delta_x,\phi )^2:_{C_0(m_
0,\hat{\Sigma })})det_{\zeta}(\Delta
+m^2_0)^{-1/2}d\phi_{C(m_0,\hat{ \Sigma })}$$
$$=exp(\frac 1{4\pi}m_1^2(ln(m_0/4)+\gamma)A)det_{\zeta}(\Delta +m^2)^{-1/2}d\phi_{C(m,\hat{\Sigma })},$$
where $A=\int dA$ and $\gamma$ is Euler's constant.
\end{theorem}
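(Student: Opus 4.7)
The plan is to establish the identity as an equality of unnormalized measures on $\mathcal{D}'(\hat{\Sigma})$ via three steps: converting the $C_0$-Wick ordering into $C$-Wick ordering, applying a Girsanov-type Gaussian change of measure, and reconciling the regularized constants via the heat kernel. For the first step, since $C(m_0;x,y)$ and $C_0(m_0;x,y)$ share the same logarithmic singularity $-\tfrac{1}{2\pi}\ln d(x,y)$ along the diagonal, their difference extends to a smooth function on $\hat{\Sigma}\times\hat{\Sigma}$, so the constant $I_0:=\int_{\hat{\Sigma}}(C-C_0)(x,x)\,dA(x)$ is well defined and $\int :(\delta_x,\phi)^2:_{C_0}\,dA = \int :(\delta_x,\phi)^2:_C\,dA + I_0$. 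This rewrites the left-hand side as $\exp(-\tfrac12 m_1^2 I_0)\det_\zeta(\Delta+m_0^2)^{-1/2}\exp(-\tfrac12 m_1^2 \int :(\delta_x,\phi)^2:_C\,dA)\,d\phi_{C(m_0)}$.

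For the second step I would work in an orthonormal eigenbasis $\{e_n\}$ of $\Delta$ with eigenvalues $\lambda_n$ and coordinates $c_n = (e_n,\phi)$, in which $d\phi_{C(m_0)}$ is an independent Gaussian product and $\int :(\delta_x,\phi)^2:_C\,dA = \sum_n(c_n^2 - (\lambda_n+m_0^2)^{-1})$. Completing the square termwise gives
$$\exp\!\left(-\tfrac12 m_1^2 \int :(\delta_x,\phi)^2:_C\,dA\right) d\phi_{C(m_0)} = R\cdot d\phi_{C(m)},$$
with $R = \prod_n \exp(m_1^2/(2(\lambda_n+m_0^2)))\sqrt{(\lambda_n+m_0^2)/(\lambda_n+m^2)}$; the product converges because the summand in $\log R$ is $O(1/\lambda_n^2)$, which is summable by Weyl's law in dimension two. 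Frullani-type identities then represent $\log R = \tfrac12\int_0^\infty e^{-m_0^2 t}\mathrm{Tr}(e^{-t\Delta})[m_1^2 - (1-e^{-m_1^2 t})/t]\,dt$, and splitting $\mathrm{Tr}(e^{-t\Delta}) = A/(4\pi t) + h(t)$ separates an area-proportional contribution, computable in closed form, from a regular contribution which through the standard $\zeta$-regularization $\log\det_\zeta(\Delta+m^2) = -\zeta'_{\Delta+m^2}(0)$ matches $\tfrac12\log[\det_\zeta(\Delta+m_0^2)/\det_\zeta(\Delta+m^2)]$. Separately, $I_0$ is evaluated using $C(m_0;x,y)=\int_0^\infty e^{-m_0^2 t}K(t;x,y)\,dt$ together with the small-argument expansion $K_0(z) = -\ln(z/2)-\gamma + O(z^2\ln z)$ that governs the short-distance flat-space Green's function.

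The hard part is the final bookkeeping: verifying that the area-proportional contributions from $\log R$ combine with $-\tfrac12 m_1^2 I_0$ to yield precisely the stated exponent $\tfrac{1}{4\pi}m_1^2(\ln(m_0/4)+\gamma)A$, and that all subleading curvature contributions in the heat-kernel expansions cancel between $\log R$ and $I_0$ to leave a constant depending only on $m_0$, $m_1$, and the area $A$. This universality is expected because the only obstruction to a trivial Gaussian change of variable lies in the short-time diagonal asymptotics $K(t;x,x)\sim 1/(4\pi t)$ of the heat kernel, the same universal singularity underlying the small-argument expansion of $K_0$ used to define $C_0$.
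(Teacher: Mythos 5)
Your first two steps coincide with the paper's own reduction: writing $\int :(\delta_x,\phi)^2:_{C_0} = \int :(\delta_x,\phi)^2:_{C} + \int C_f(x,x)\,dA$ with $C_f=C-C_0$ smooth, and completing the square mode by mode, so that the left-hand measure becomes $\exp(-\tfrac12 m_1^2\int C_f)\,R\,d\phi_{C(m,\hat\Sigma)}$ with your $R$ equal to $det_2(1+m_1^2C)^{-1/2}$. At that point the theorem is equivalent to the multiplicative-anomaly identity
$det_\zeta(\Delta+m^2)=det_\zeta(\Delta+m_0^2)\,det_2(1+m_1^2C)\,\exp\bigl(m_1^2\int (C_f-\gamma_0)\bigr)$ with $\gamma_0=\tfrac1{2\pi}(\ln(m_0/4)+\gamma)$, and this is exactly where your argument stops: the ``final bookkeeping'' you defer is not bookkeeping, it is the entire content of the theorem, namely that the geometry-dependent quantity $\int C_f(x,x)\,dA$ cancels against the corresponding geometric content of the determinant ratio and leaves only the universal constant $\gamma_0 A$.

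Two concrete problems with your sketch of that last step. First, the identification of the $h(t)$ part of $\log R$ with $\tfrac12\log\bigl[det_\zeta(\Delta+m_0^2)/det_\zeta(\Delta+m^2)\bigr]$ is not an equality: computing $\zeta_{m_0}'(0)-\zeta_m'(0)$ from the Mellin representation, the integral $\int_0^\infty t^{s-1}e^{-m_0^2t}(1-e^{-m_1^2t})\,tr(e^{-t\Delta})\,dt$ has a simple pole at $s=0$ with residue $m_1^2A/4\pi$, and multiplying by $1/\Gamma(s)=s+\gamma s^2+\cdots$ shows the determinant ratio itself contains area-proportional terms (including a $\gamma\,m_1^2A/4\pi$ and mass-dependent finite parts of the $A/(4\pi t)$ piece), not only the $h(t)$ contribution; these must be tracked explicitly. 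Second, and more seriously, $I_0=\int C_f(x,x)\,dA$ cannot be ``evaluated'' from the flat-space expansion $K_0(z)=-\ln(z/2)-\gamma+O(z^2\ln z)$: on a closed surface $C_f(x,x)$ is a nonconstant, globally geometry-dependent function (a Robin-constant-type quantity), with no closed form. What is actually needed is the identity
$\int_0^\infty\{tr(e^{-t(\Delta+m_0^2)})-\tfrac{A}{4\pi t}\}\,dt=\int (C_f(x,x)-\gamma_0)\,dA$,
which is the paper's main lemma; its proof requires the diagonal limit $\lim_{T\downarrow 0}\bigl(tr(e^{-T(\Delta+m_0^2)}C_0)+\tfrac{A}{2\pi}\ln T\bigr)$ computed with the off-diagonal heat-kernel asymptotics $\tfrac1{4\pi T}e^{-d(x,y)^2/4T}$, and it is precisely there that $\ln 4$ and $\gamma=-\int_0^\infty e^{-u}\ln u\,du$ enter. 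Without this step the exponent $\tfrac1{4\pi}m_1^2(\ln(m_0/4)+\gamma)A$ is not established. Your proper-time representation of $\log R$ is a legitimate alternative to the paper's appeal to the finite-part calculus of Lemma 3.10 of \cite{Pi1}, but it only covers the comparatively soft half of the argument.
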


Although I do not understand its significance, this Theorem
singles out $4exp(-\gamma)$ as a special value for the bare mass.

\section{Proof of Theorem \ref{maintheorem}}

Throughout this section $C=C(m_0,\hat{\Sigma })$ and
$C_0=C_0(m_0,\hat{\Sigma })$. It is a fundamental fact that
$$C=C(m_0;x,y)=C_0(m_0;x,y)+C_f(m_0;x,y),$$ where $C_f$ is a
smooth function of $(x,y)\in\hat{\Sigma}\times\hat{ \Sigma}$. We
refer to $C_f$ as the finite part of $C$. Let $\Delta
f_k=\lambda_kf_k$, where the $ f_k$ are normalized eigenfunctions
corresponding to the eigenvalues $0=\lambda_0<\lambda_1..$. We
will also write $\int ( \cdot )$ for the integral over
$\hat{\Sigma}$ with respect to $dA$.

We first recall that
\begin{equation}\label{C}\int :(\delta_x,\phi )^2:_C=\sum
(\phi_k^2-E(\phi_k^2))=\sum (\phi_ k^2-\frac
1{m_0^2+\lambda_k}),\end{equation} where $E(\cdot )$ denotes
expectation with respect to $d\phi_C$. To verify (\ref{C}), let
$\delta_{t,x}=exp(-t\Delta )\delta_x$. Then by definition (see
section 6.3 of \cite{GJ}) the left hand side is the limit as
$t\downarrow 0$ of
$$\int_x\{(\phi ,\delta_{t,x})^2-(\delta_{t,x},C\delta_{t,x})\}$$
$$=\int_x\{\int_y\int_z\delta_{t,x}(y)\delta_{t,x}(z)\phi (y)\phi
(z)-\int_w\delta_{t,x}(w)C\delta_{t,x}(w)\}$$
$$=\int_x\{\int_y\int_z\sum e^{-t(\lambda_j+\lambda_k)}f_j(x)f_j(y
)f_k(x)f_k(z)\phi (y)\phi (z)$$
$$-\int_w\sum e^{-t(\lambda_j+\lambda_k)}(m_0^2+\lambda_k)^{-1}f_
j(x)f_j(w)f_k(x)f_k(w)\}$$
$$=\int_x\{\sum e^{-t(\lambda_j+\lambda_k)}f_j(x)c_jf_k(x)\phi_k$$
$$-\int_w\sum e^{-t(\lambda_j+\lambda_k)}(m_0^2+\lambda_k)^{-1}f_
j(x)f_j(w)f_k(x)f_k(w)\}$$
$$=\sum (e^{-2t\lambda_j}\phi_j^2-e^{-2t\lambda_j}(m_0^2+\lambda_
j)^{-1}).$$ When we take the limit as $t\to 0$, we obtain
(\ref{C}). Thus
\begin{equation}\label{C0}\int :(\delta_x,\phi
)^2:_{C_0}=\sum (\phi_k^2-\frac 1{m_0^2+\lambda_ k})+\int
C_f(m_0,x,x).\end{equation}

We now claim that $$exp(-\frac 12m_1^2\int :(\delta_x,\phi
)^2:_{C_0})d\phi_{C(m_0,\hat{ \Sigma })}$$
\begin{equation}\label{measure} =exp(-\frac 12m_1^2\int
C_f(m_0,x,x))det_2(1+m_1^2C)^{-1/2}d\phi_{ C(m,\hat{\Sigma
})},\end{equation} where $det_2$ denotes the Hilbert-Schmidt
regularized determinant. This follows from (\ref{C0}) and the
following calculation:
$$exp(-\frac {m_1^2}2\sum (\phi_k^2-\frac 1{m_0^2+\lambda_k}))d\phi_
C$$
$$=exp(-\frac {m_1^2}2\sum (\phi_k^2-\frac 1{m_0^2+\lambda_k}))\prod_{
k=0}^{\infty}(\frac {m_0^2+\lambda_k}{2\pi})^{1/2}e^{-\frac {m^2_
0+\lambda_k}2\phi_k^2}d\phi_k$$
$$=\prod_{k=0}^{\infty}e^{\frac {m_1^2}2\frac 1{m_0^2+\lambda_k}}\left
(\frac {m_0^2+\lambda_k}{m^2+\lambda_k}\right)^{1/2}\prod_{k=0}^{
\infty}(\frac {m^2+\lambda_k}{2\pi})^{1/2}e^{-\frac {m^2+\lambda_
k}2\phi_k^2}d\phi_k$$
$$=\prod_{k=0}^{\infty}((1+\frac {m_1^2}{m_0^2+\lambda_k})e^{-\frac {
m_1^2}{m_0^2+\lambda_k}})^{-1/2}d\phi_{C(m,\hat{\Sigma
})}=det_2(1+m_1^2C)^{-1/2}d\phi_{ C(m,\hat{\Sigma })}.$$

Theorem \ref{maintheorem} is therefore equivalent to the following
statement about multiplicative anomalies for zeta function
determinants.

\begin{theorem}\label{multiplicative}
$$det_{\zeta}(m_1^2+m_0^2+\Delta )=det_{\zeta}(m_0^2+\Delta )det_
2(1+m_1^2C(m_0,\hat{\Sigma }))exp(m_1^2\int (C_f(m_0
;x,x)-\gamma_0)),$$ where $\gamma_0=\frac 1{2\pi}(ln(\frac 14
m_0)+\gamma)$.
\end{theorem}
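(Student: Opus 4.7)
My proof strategy is to differentiate in $m_1^2$ with $m_0$ held fixed; write $m^2 := m_0^2 + m_1^2$. At $m_1^2=0$ both sides equal $\det_\zeta(\Delta+m_0^2)$, so it suffices to verify that their logarithmic derivatives in $m_1^2$ agree for all $m_1^2 \ge 0$.

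For $\log\det_\zeta(\Delta+m^2)$, using $\log\det_\zeta A = -\zeta_A'(0)$ combined with $\frac{d}{du}\zeta_u(s) = -s\,\zeta_u(s+1)$ and expanding $\zeta_{m^2}(s+1)$ around its simple pole at $s=0$ (residue $A/(4\pi)$ by Weyl's law) yields
$$\frac{d}{dm_1^2}\log\det_\zeta(\Delta+m^2) = \mathrm{FP}_{s=1}\zeta_{m^2}(s).$$
The product representation of $\det_2$ used just above gives
$$\frac{d}{dm_1^2}\log\det_2(1+m_1^2 C)=\sum_k\bigl[(m^2+\lambda_k)^{-1}-(m_0^2+\lambda_k)^{-1}\bigr],$$
an absolutely convergent sum; since $\zeta_{m^2}$ and $\zeta_{m_0^2}$ share the same residue at $s=1$, this sum equals $\mathrm{FP}_{s=1}\zeta_{m^2}(s)-\mathrm{FP}_{s=1}\zeta_{m_0^2}(s)$. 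Together with the immediate contribution $\int(C_f-\gamma_0)\,dA$ from the exponential factor, matching derivatives collapses the theorem to the $m_1^2$-free identity
$$\mathrm{FP}_{s=1}\zeta_{m_0^2}(s) = \int_{\hat\Sigma}\bigl(C_f(m_0;x,x)-\gamma_0\bigr)\,dA(x).$$

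To prove this residual identity I use heat kernels on both sides. For the left-hand side, $\zeta_{m_0^2}(s)\Gamma(s)=\int_0^\infty t^{s-1}e^{-tm_0^2}\mathrm{Tr}\,e^{-t\Delta}\,dt$; splitting $\mathrm{Tr}\,e^{-t\Delta}=\frac{A}{4\pi t}+\bar\theta(t)$, where $\bar\theta$ is bounded at $t=0$ by the Minakshisundaram--Pleijel expansion, provides the meromorphic continuation, and extracting the constant term at $s=1$ from $\frac{A}{4\pi}\cdot m_0^{-2(s-1)}/(s-1)$ gives $\mathrm{FP}_{s=1}\zeta_{m_0^2}(s)=-\frac{A\ln m_0}{2\pi}+\int_0^\infty e^{-tm_0^2}\bar\theta(t)\,dt$. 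For the right-hand side, $C(m_0;x,y)=\int_0^\infty e^{-tm_0^2}K_t(x,y)\,dt$; comparing with the free model $H_t(x,y)=(4\pi t)^{-1}\exp(-d(x,y)^2/(4t))$, whose time integral evaluates in closed form as $\frac{1}{2\pi}K_0(m_0\,d(x,y))$, and invoking the small-argument expansion $K_0(z)=-\ln(z/2)-\gamma+O(z^2\log z)$ identifies the logarithmic singularity matching $C_0$. This exhibits $C_f(m_0;x,x)$ as a universal constant plus the convergent quantity $\int_0^\infty e^{-tm_0^2}[K_t(x,x)-(4\pi t)^{-1}]\,dt$; integrating over $\hat\Sigma$ and interchanging orders reproduces on the right-hand side the same time integral as on the left, reducing the identity to matching a combination of $\ln m_0$, $\ln 2$, and $\gamma$ against $A\gamma_0$.

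The main obstacle is this final bookkeeping. The cancellation of Weyl-law residues across the two sides is automatic, and the Fubini interchange is justified by the boundedness of $K_t(x,x)-1/(4\pi t)$ near $t=0$ together with the exponential decay supplied by $e^{-tm_0^2}$. What remains is a delicate tracking of the $\ln m_0$, $\ln 2$, and Euler $\gamma$ contributions coming from (i) the expansion of $m_0^{-2(s-1)}/(s-1)$ at $s=1$ in the Mellin analysis and (ii) the small-argument expansion of $K_0$; these must reassemble to yield exactly $\gamma_0 = \frac{1}{2\pi}(\ln(m_0/4)+\gamma)$.
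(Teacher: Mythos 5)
Your reduction step is correct and genuinely different from the paper's. Differentiating in $m_1^2$ and using $\frac{d}{dm^2}\log det_{\zeta}(\Delta+m^2)=\mathrm{FP}_{s=1}\zeta_{m^2}(s)$, together with the absolutely convergent sum $\sum_k[(m^2+\lambda_k)^{-1}-(m_0^2+\lambda_k)^{-1}]$ for the derivative of $\log det_2(1+m_1^2C)$, collapses Theorem \ref{multiplicative} to the $m_1$-independent identity $\mathrm{FP}_{s=1}\zeta_{m_0^2}(s)=\int(C_f(m_0;x,x)-\gamma_0)$. That is exactly the content of the paper's Lemma \ref{mainlemma} (the constant term of $tr(C^sC)=\zeta_{m_0^2}(s+1)$ at $s=0$); the paper reaches it instead through the multiplicative-anomaly formula of Lemma 3.10 of \cite{Pi1} and the expansion of $\log(1+m_1^2C)$, with $m_1^2$ taken small. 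Your route avoids that machinery and the smallness restriction, and both routes buy the same thing. Your heat-kernel treatment of the residual identity is also sound in outline: $\mathrm{FP}_{s=1}\zeta_{m_0^2}(s)=-\frac{A}{2\pi}\ln m_0+\int_0^\infty e^{-tm_0^2}\bar\theta(t)\,dt$ with $\bar\theta(t)=tr(e^{-t\Delta})-\frac{A}{4\pi t}$ is correct, and comparing $C=\int_0^\infty e^{-tm_0^2}K_t\,dt$ with the flat model via $\frac1{2\pi}K_0(m_0d)$ and $K_0(z)=-\ln(z/2)-\gamma+O(z^2\ln z)$ gives, once you make the "universal constant" explicit, $C_f(m_0;x,x)=\frac{1}{2\pi}(\ln 2-\gamma)+\int_0^\infty e^{-tm_0^2}\bigl[K_t(x,x)-\frac{1}{4\pi t}\bigr]dt$.

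The genuine gap is that you stop exactly where the content of the theorem lies: the constant. The "delicate tracking" you defer is in fact already forced by the two formulas you have. Subtracting them, the terms $\int_0^\infty e^{-tm_0^2}\bar\theta(t)\,dt$ cancel and the residual identity holds if and only if $\gamma_0=\frac{1}{2\pi}(\ln(2m_0)-\gamma)$, which is \emph{not} the stated $\gamma_0=\frac{1}{2\pi}(\ln(m_0/4)+\gamma)$; the two differ by $\frac{1}{2\pi}(\ln 8-2\gamma)\neq 0$. A flat-torus check by the method of images (where $C_f(x,x)=\frac{1}{2\pi}(\ln 2-\gamma)+\sum_{0\neq k\in L\mathbb Z^2}\frac{1}{2\pi}K_0(m_0|k|)$ and $\int_0^\infty e^{-tm_0^2}\bar\theta\,dt=\frac{A}{2\pi}\sum_{k\neq0}K_0(m_0|k|)$) confirms both of your intermediate formulas, so the bookkeeping cannot "reassemble to exactly" the printed $\gamma_0$; completed honestly, your argument proves the identity with the other constant. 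The source of the mismatch with the paper's own computation is visible there: in evaluating $\lim_{T\downarrow0}(tr(e^{-TE}\ln d(x,y))-A\ln T)$ the per-point Gaussian integral is $\frac12(\ln(4T)-\gamma)$, not $\ln(4T)-\gamma$ (i.e. $\ln r=\frac12\ln r^2$), and the split leading to (\ref{eqn5}) handles the large-$t$ subtraction and the $1/\Gamma(s+1)$ factor loosely. So as submitted your proof is incomplete at the decisive step, and carrying it out would actually show the statement requires a corrected $\gamma_0$ rather than confirm it as written; you need to either finish the computation and address this discrepancy explicitly, or find an error in one of your two displayed evaluations (there does not appear to be one).
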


\begin{proof} Let $E=m_0^2+\Delta=C^{-1}$. Then the left hand side
of Theorem \ref{multiplicative} equals
$$det_{\zeta}(m_1^2+E)=det_{\zeta}(E(1+m_1^2C)).$$
As a pseudodifferential operator, $C$ has order $-2$. Because
$\hat \Sigma$ is two dimensional, in general, if $\tilde {C}$ has
order $-2$, then $tr(E^{-s}\tilde {C})$ is holomorphic for
$Re(s)>0$ and has a Laurent expansion in a neighborhood of $s=0$
of the form
$$tr(E^{-s}\tilde {C})=Res(\tilde {C})\frac 1s+c_0+O(s),$$
where $Res$ is the noncommutative residue and the constant term
$c_0$ is called the finite part of the trace. In section 3 of
\cite{Pi1} we wrote $c_0=\mathcal F\mathcal Ptr(E^{-s}\tilde {C})$
(this does possibly depend upon the principal symbol of $E$, so
$E$ is included in the notation). By Lemma 3.10 of \cite{Pi1} the
above determinant equals
$$det_{\zeta}(E)exp(\mathcal F\mathcal Ptr(E^{-s}log(1+m_1^2C))).$$
Thus to prove Theorem \ref{multiplicative}, we need to show that
$$exp((\mathcal F\mathcal Ptr(C^slog(1+m_1^2C))-m_1^2\int (C_f-\gamma_0)))=det_2(1+m_
1^2C)$$ or \begin{equation}\label{eqn1}\mathcal F\mathcal
Ptr(C^slog(1+m_1^2C))-m_1^2\int (C_f-\gamma_0)=tr(log(1+m_1^2C)
-m_1^2C).\end{equation} It suffices to show this for $m_1^2$
sufficiently small. The right hand side of (\ref{eqn1}) equals
\begin{equation}\label{eqn2}tr(-\frac 12(m_1^2C)^2+\frac 13(m_1^2C)^3+...),\end{equation}
i.e. we just delete the first term in the expansion of the
logarithm.

We now consider the left hand side of (\ref{eqn1}). For $Re(s)>0$,
$$tr(C^slog(1+m_1^2C))-m_1^2\int (C_f-\gamma_0)$$
$$=tr(C^sm_1^2C)-m_1^2\int (C_f-\gamma_0)+tr(C^s(-\frac 12(m_1^2C)^2+\frac 1
3(m_1^2C)^3+...))$$ The third term extends to an analytic function
in a neighborhood of $s=0$, and its value at $s=0$ agrees with the
trace in (\ref{eqn2}). Thus to prove (\ref{eqn1}) we need to show
that for small $s$
$$tr(C^sm_1^2C)-m_1^2\int (C_f-\gamma_0)=\frac {Res(m_1^2C)}s+h(s)$$
where $h(s)$ is holomorphic in a neighborhood of $s=0$ and
vanishes at $s=0$. This is implied by the following lemma, which
is probably well-known to experts.

\begin{lemma}\label{mainlemma}
$$tr(C^sC)=\frac {Res(C)}s+\int (C_f-\gamma_0)+O(s),$$
where $Res(C)=A/4\pi$ and $A=\int dA$.
\end{lemma}

The left hand side of Lemma \ref{mainlemma} equals
\begin{equation}\label{trace}\frac 1{\Gamma
(s+1)}\int_0^{\infty}t^str(e^{-tE})dt.\end{equation} We use the
asymptotic expansion
$$tr[e^{-t\Delta}]=\frac 1{4\pi t}\int dA+\frac 1{12\pi}
\int r_gdA+O(t),\quad as\quad t\downarrow 0,$$ where $r_g$ denotes
scalar curvature. This expansion implies
$$tr[e^{-tE}]=(e^{-tm_0^2})(\frac 1{4\pi t}A+\frac 1{12\pi}\int_{
\Sigma}r_gdA+O(t))$$
$$=\frac 1{4\pi t}A+\frac 1{12\pi}\int_{\Sigma}r_gdA-\frac {m_0^
2}{4\pi}A+O(t)$$ as $t\downarrow 0$. Then (\ref{trace}) equals
$$\frac 1{\Gamma (s+1)}\int_0^{\infty}t^s\frac A{4\pi t}dt+\frac
1{\Gamma (s+1)}\int_0^{\infty}t^s\{tr(e^{-tE})-\frac A{4\pi t}\}d
t$$
$$=\frac A{4\pi}\frac 1s+\frac 1{\Gamma (s+1)}\int_
0^{\infty}t^s\{tr(e^{-tE})-\frac A{4\pi t}\}dt$$ The second term
is holomorphic in a neighborhood of $s=0$. Thus
\begin{equation}\label{eqn5}tr(C^{s+1})=\frac A{4\pi}\frac
1s+\int_0^{\infty}\{tr(e^{-tE})-\frac A{4\pi
t}\}dt+O(s)\end{equation} This implies that $Res(C)=A/{4\pi}$.

\begin{remark} The residue can be calculated in a second way. If $\alpha$
denotes the canonical one-form on $T^*\hat \Sigma$, and $\omega=d
\alpha$, then because the principal symbol of $C$ is $\vert p
\vert ^{-2}$ (as a function on $T^*\hat \Sigma$),
\begin{equation}\label{eqn4}Res(C)=\frac 1{(2\pi )^2}\int_{S(T^{*}\Sigma )}\vert p
\vert ^{-2}\alpha d\alpha=\frac 1{(2\pi )^2}\int_{\vert
p\vert^2\le 1}\omega\wedge \omega .\end{equation} In local
coordinates, if $p=\sum p_j dq^j$, then $\vert p \vert
^2=g^{i,j}p_ip_j$, and the integral in (\ref{eqn4}) has the local
expression
$$\frac 1{(2\pi )^2}\int_q (\int_{\vert
p\vert^2\le 1}dp_1dp_2)dq^1dq^2=\frac 1{(2\pi )^2}\int_q
\pi(det(g_{i,j})^{1/2})dq^1dq^2=\frac 1{4\pi}\int dA.$$
\end{remark}

Given (\ref{eqn5}), to complete the proof of the Lemma, we need to
show
\begin{equation}\label{eqn6}\int_0^{\infty}\{tr(e^{-tE})-\frac
A{4\pi t} \}dt=\int (C_f(x,x)-\gamma_0).\end{equation} The left
hand side equals

$$\lim_{T\downarrow 0}\int_T^{1/T}\{-\frac d{dt}tr(e^{-tE}C)-\frac
A{4\pi t}\}dt$$

$$=\lim_{T\downarrow 0}(tr(e^{-TE}C-e^{-T^{-1}E}C)+\frac A{2 \pi}ln
(T))$$

$$=tr(C_f)+\lim_{T\downarrow 0}(tr(e^{-TE}C_0)+\frac A{2 \pi}ln(T))$$

\begin{equation}\label{eqn7}=tr(C_f)+\lim_{T\downarrow 0}(tr(e^{-TE}(-\frac 1{2
\pi}ln(d(x,y)))+\frac A{2 \pi}ln(T))-\frac
A{4\pi}ln(m_0^2)).\end{equation}

We now calculate

$$\lim_{T\downarrow 0}(tr(e^{-TE}ln(d(x,y))-AlnT)$$

\begin{equation}\label{eqn8}=\lim_{T\downarrow 0}(\int_y
\int_xe^{-TE}(y,x)ln(d(x,y))-Aln(T)).\end{equation}

For small $T$ the double integral is concentrated near the
diagonal. For fixed $y$, there is an asymptotic expansion

$$e^{-TE}(y,x)=\frac 1{4 \pi T}exp(-d(x,y)/4T)(1+O(T))$$
In exponential coordinates centered at $y$, $dA=j(v)d\lambda(v)$,
where $j(v)=1+O(r^2)$, $r=\vert v \vert$, and $d\lambda(v)$
denotes the Riemannian volume for $v\in T_y$. Thus (\ref{eqn8})
equals

$$\lim_{T\downarrow 0}(\int_y \int_x\frac 1{4\pi
T}e^{-d(x,y)^2/4T}(1+O(T))ln(d(x,y))-Aln(T))$$

$$=\lim_{T\downarrow 0}(\int_y
\int_x\frac 1{4\pi T}e^{-d(x,y)^2/4T}ln(d(x,y))-Aln(T))$$

$$=\lim_{T\downarrow 0}(\int_y (
\int_{v\in T_y}\frac 1{4\pi
T}e^{-r^2/4T}ln(r)j(v)d\lambda(v)-ln(T)))$$

$$=A\lim_{T\downarrow 0}(\int_0^{\infty}
e^{-r^2/4T}(ln(r^2/4T)+ln(4T)d(r^2/4T)-ln(T)),$$ because $\int
\frac 1{4\pi T}e^{-r^2/4T}ln(r)O(r^2)d\lambda(v)\to 0$ as $T\to 0$

$$=A(ln(4)+(\int_0^{\infty}
e^{-u}(ln(u)du)=A(ln(4)-\gamma).$$

When we plug this into (\ref{eqn7}), we obtain (\ref{eqn6}). This
completes the proof of Lemma \ref{mainlemma} and Theorem
\ref{maintheorem}.

\end{proof}

\section{Conformally Invariant Background}

As in the previous section $\hat{\Sigma}$ is a Riemannian surface.
As in Section 4 of \cite{Pi2}, $d\phi_{C(0,\hat{\Sigma})}$ denotes
the infinite conformally invariant measure on generalized
functions $\phi=\sum \phi_n f_n$ given by

$$d\phi_{C(0,\hat{\Sigma})}=d\lambda (
\phi_0)\prod_{k=1}^{\infty}(\frac {\lambda_k}{2\pi})^{1/2}e^{
-\frac {\lambda_k}2\phi_k^2}d\phi_k$$

\begin{lemma}\label{massless}
$$\lim_{m_0\downarrow 0}det_{\zeta}(m_0^2+\Delta )^{-1/2}d\phi_{C
(m_0,\hat{\Sigma })}=\frac 1{\sqrt {2\pi}}det_{\zeta}'(\Delta )^{
-1/2}d\phi_{C(0,\hat{\Sigma })}$$
\end{lemma}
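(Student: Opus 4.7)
The plan is to diagonalize both sides in the eigenbasis $\{f_k\}$ of $\Delta$, isolate the contribution of the zero mode $\phi_0$, and check that the factor of $m_0$ produced in the Gaussian normalization of the zero mode exactly cancels the factor of $m_0^{-1}$ produced by the $\lambda_0=0$ contribution to $det_{\zeta}(m_0^2+\Delta)^{-1/2}$. Every mode with $k\geq 1$ is continuous in $m_0$ at $m_0=0$ and poses no difficulty, so the whole problem reduces to understanding the zero mode plus one analytic-continuation statement.

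First I would write
$$d\phi_{C(m_0,\hat\Sigma)}=\frac{m_0}{\sqrt{2\pi}}\,e^{-\frac{m_0^2}{2}\phi_0^2}\,d\phi_0\cdot\prod_{k=1}^\infty\Bigl(\frac{m_0^2+\lambda_k}{2\pi}\Bigr)^{1/2}e^{-\frac{m_0^2+\lambda_k}{2}\phi_k^2}\,d\phi_k,$$
and split the spectral zeta function as $\zeta(s;m_0)=m_0^{-2s}+\zeta_{\geq 1}(s;m_0)$, with $\zeta_{\geq 1}(s;m_0):=\sum_{k\geq 1}(m_0^2+\lambda_k)^{-s}$. Differentiating at $s=0$ and exponentiating yields
$$det_{\zeta}(m_0^2+\Delta)^{-1/2}=m_0^{-1}\exp\bigl(\tfrac12\zeta'_{\geq 1}(0;m_0)\bigr),$$
exhibiting the singular factor $m_0^{-1}$ explicitly.

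The only nontrivial step is to verify that $\zeta'_{\geq 1}(0;m_0)\to \zeta'_{\neq 0}(0)$ as $m_0\downarrow 0$, where $\zeta_{\neq 0}(s):=\sum_{k\geq 1}\lambda_k^{-s}$ is the zeta function of $\Delta$ restricted to its nonzero spectrum. I would prove this via the Mellin representation
$$\Gamma(s)\,\zeta_{\geq 1}(s;m_0)=\int_0^\infty t^{s-1}e^{-tm_0^2}\bigl(tr(e^{-t\Delta})-1\bigr)\,dt,$$
splitting the integral at $t=1$. On $[1,\infty)$ the integrand is dominated, uniformly in $m_0\in[0,1]$, by $t^{\mathrm{Re}(s)-1}e^{-t\lambda_1}$, so this piece is entire in $s$ and jointly continuous in $(s,m_0)$ by dominated convergence. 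On $(0,1]$ the meromorphic structure at $s=0$ is controlled by the $m_0$-independent heat asymptotic $tr(e^{-t\Delta})-1=A/(4\pi t)+O(1)$; multiplying by $1/\Gamma(s)$ removes the apparent pole at $s=0$, and continuity at $m_0=0$ follows by expanding $e^{-tm_0^2}=1+O(tm_0^2)$ and integrating termwise. Hence $\zeta_{\geq 1}(s;m_0)$ is holomorphic at $s=0$ for every $m_0\geq 0$ and jointly continuous in $(s,m_0)$ there, and Cauchy's formula delivers continuity of $\zeta'_{\geq 1}(0;m_0)$ at $m_0=0$.

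Putting everything together, $det_{\zeta}(m_0^2+\Delta)^{-1/2}d\phi_{C(m_0,\hat\Sigma)}$ equals
$$\frac{1}{\sqrt{2\pi}}\,e^{\frac12\zeta'_{\geq 1}(0;m_0)}\,e^{-\frac{m_0^2}{2}\phi_0^2}\,d\phi_0\cdot\prod_{k=1}^\infty\Bigl(\frac{m_0^2+\lambda_k}{2\pi}\Bigr)^{1/2}e^{-\frac{m_0^2+\lambda_k}{2}\phi_k^2}\,d\phi_k,$$
and letting $m_0\downarrow 0$ factor by factor produces exactly $\frac{1}{\sqrt{2\pi}}det_{\zeta}'(\Delta)^{-1/2}d\phi_{C(0,\hat\Sigma)}$. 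The main obstacle is the analytic-continuation step of the preceding paragraph; once it is in hand, the rest is bookkeeping of $m_0$-powers.
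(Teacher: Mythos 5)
Your proposal is correct and follows essentially the same route as the paper: split off the zero mode from both the zeta determinant (producing the explicit $m_0^{-1}$) and the Gaussian measure (producing the compensating $m_0/\sqrt{2\pi}$), cancel, and pass to the limit mode by mode. The only difference is that you supply, via the Mellin/heat-kernel representation, a justification of the continuity of $\zeta'_{\geq 1}(0;m_0)$ at $m_0=0$, a step the paper's proof takes for granted; that justification is sound.
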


\begin{proof}On the one hand
$$det_{\zeta}(m_0^2+\Delta )^{-1/2}=exp(\frac 12\frac d{ds}\vert_{
s=0}\sum_{n=0}^{\infty}(m_0^2+\lambda_n)^{-s})$$

$$=m_0^{-1}exp(\frac 12\frac d{ds}\vert_{s=0}\sum_{n=1}^{\infty}(
m_0^2+\lambda_n)^{-s})$$

On the other hand
$$d\phi_{C(m_0,\hat{\Sigma })}=\prod_{k=0}^{\infty}(\frac {m_0^2+
\lambda_k}{2\pi})^{1/2}e^{-\frac {m^2_0+\lambda_k}2\phi_k^2}d\phi_
k$$
$$\frac {m_0}{\sqrt {2\pi}}exp(-\frac {m_0^2}2\phi_0^2)d\lambda (
\phi_0)\prod_{k=1}^{\infty}(\frac
{m_0^2+\lambda_k}{2\pi})^{1/2}e^{ -\frac
{m^2_0+\lambda_k}2\phi_k^2}d\phi_k$$

When we form the product, there is a cancellation involving $m_0$,
$$det_{\zeta}(m_0^2+\Delta )^{-1/2}d\phi_{C(m_0,\hat{\Sigma })}=$$
$$\frac 1{\sqrt {2\pi}}exp(\frac 12\frac d{ds}\vert_{s=0}\sum_{n=
1}^{\infty}(m_0^2+\lambda_n)^{-s})exp(-\frac
{m_0^2}2\phi_0^2)d\lambda (\phi_0)$$
$$\prod_{k=1}^{\infty}(\frac {m_0^2+\lambda_k}{2\pi})^{1/2}e^{-\frac {
m^2_0+\lambda_k}2\phi_k^2}d\phi_k$$ When we take the limit as
$m_0\to 0$, we obtain the right hand side of Lemma \ref{massless},
completing the proof.
\end{proof}

Recall that the measures $d\phi_{C(m,\hat{\Sigma})}$ are mutually
absolutely continuous, for $0\le m < \infty$; see Lemma 3 of
\cite{Pi2}. Suppose that $\sigma>0$. Then

$$e^{-\frac 12\sigma\int_{\hat{\Sigma}}:(\delta_x,\phi )^2:_{C_0(
m,\hat{\Sigma })}}det_{\zeta}(m_0^2+\Delta
)^{-1/2}d\phi_{C(m_0,\hat{ \Sigma })}$$
$$=e^{-\frac 12\sigma\int_{\hat{\Sigma}}:(\delta_x,\phi )^2:_{C_0
(m,\hat{\Sigma })}}e^{\frac 12\sigma\int_{\hat{\Sigma}}:(\delta_x
,\phi )^2:_{C_0(m_0,\hat{\Sigma })}}$$
$$e^{-\frac 12\sigma\int_{\hat{\Sigma}}:(\delta_x,\phi )^2:_{C_0(
m_0,\hat{\Sigma })}}det_{\zeta}(m_0^2+\Delta )^{-1/2}d\phi_{C(m_0
,\hat{\Sigma })}$$

$$=(\frac m{m_0})^{\frac {\sigma A}{4\pi}}exp(\sigma \gamma_0 A)det_{
\zeta}(\sigma +m_0^2+\Delta )^{-1/2}d\phi_{C(\sqrt {\sigma +m_0^2}
,\hat{\Sigma })}$$

$$=(mexp(\gamma-ln(4)))^{\frac {\sigma A}{4\pi}}det_{ \zeta}(\sigma +m_0^2+\Delta )^{-1/2}d\phi_{C(\sqrt
{\sigma +m_0^2} ,\hat{\Sigma })}$$

By taking the limit as $m_0 \to 0$, using Lemma \ref{massless}, we
obtain

\begin{theorem}\label{masslessbackground}
 $$e^{-\frac 12\sigma\int_{\hat{\Sigma}}:(\delta_x,\phi )^2:_{C_0(
m,\hat{\Sigma })}}det_{\zeta}'(\Delta
)^{-1/2}d\phi_{C(0,\hat{\Sigma } )}$$
$$=(\frac m4 exp(\gamma))^{\frac {\sigma A}{4\pi}}det_{ \zeta}(\sigma +\Delta )^{-1/2}d\phi_{C(\sqrt
{\sigma } ,\hat{\Sigma })}$$
\end{theorem}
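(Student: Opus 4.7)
The plan is to obtain Theorem~\ref{masslessbackground} by passing to the limit $m_0\downarrow 0$ in the identity displayed immediately above its statement. In that pre-theorem identity the exponential prefactor $e^{-\frac 12\sigma\int:(\delta_x,\phi)^2:_{C_0(m,\hat{\Sigma})}}$ on the left and the constant $(m\,exp(\gamma-ln(4)))^{\sigma A/(4\pi)}$ on the right do not depend on $m_0$, so all of the $m_0$-dependence sits in the two background measures; the strategy is simply to take the limit on each side separately.

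The right-hand side is the easy part. Because $\sigma>0$, the operator $\sigma+m_0^2+\Delta$ stays uniformly bounded below as $m_0\downarrow 0$, so there is no zero-mode obstruction: the heat-kernel definition of the zeta determinant gives $det_{\zeta}(\sigma+m_0^2+\Delta)^{-1/2}\to det_{\zeta}(\sigma+\Delta)^{-1/2}$, and $d\phi_{C(\sqrt{\sigma+m_0^2},\hat{\Sigma})}$ converges in the evident mode-by-mode sense to $d\phi_{C(\sqrt{\sigma},\hat{\Sigma})}$. On the left-hand side the entire $m_0$-dependent part of the measure factor is $det_{\zeta}(m_0^2+\Delta)^{-1/2}d\phi_{C(m_0,\hat{\Sigma})}$, whose limit Lemma~\ref{massless} identifies: after separating the zero mode one sees the explicit cancellation between the $m_0^{-1}$ coming from the zeta determinant and the factor $m_0$ produced by the $k=0$ Gaussian normalization $(m_0^2/2\pi)^{1/2}$.

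The only step I expect to require care is pulling the Wick-squared prefactor through the limit. Although the prefactor is itself independent of $m_0$, I still need to know that multiplication by it commutes with the $m_0\downarrow 0$ limit of the background measure in the sense required to interpret the final identity of measures. For this I would lean on two ingredients already available in the paper: the mutual absolute continuity of the measures $d\phi_{C(m_0,\hat{\Sigma})}$ for $0\le m_0<\infty$ recalled just before the theorem, and the explicit spectral expansion (\ref{C0}) of the Wick square, which exhibits $\int:(\delta_x,\phi)^2:_{C_0(m,\hat{\Sigma})}$ as an $L^2$-convergent series in the coordinates $\phi_k$ with coefficients independent of the choice of background. Together these reduce the justification to standard Gaussian density manipulations. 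Combining the three limits then yields the identity claimed in Theorem~\ref{masslessbackground}.
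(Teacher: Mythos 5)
Your proposal is correct and follows the paper's own route: the paper likewise derives the displayed pre-theorem identity (via Theorem \ref{maintheorem} with $m_1^2=\sigma$ and the change of Wick ordering) and then simply lets $m_0\downarrow 0$, invoking Lemma \ref{massless} for the factor $det_{\zeta}(m_0^2+\Delta)^{-1/2}d\phi_{C(m_0,\hat{\Sigma})}$ and the evident continuity in $m_0$ on the right-hand side. Your extra remarks on commuting the $\phi$-dependent Wick-square density with the limit, using mutual absolute continuity and the spectral expansion, only make explicit a step the paper passes over silently.
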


In Segal's approach to qft, the map $\Sigma \to Area(\Sigma)$
defines an additive homomorphism from the category of Riemannian
surfaces to $\mathbb R$. This is the significance of the term
involving area.

\end{document}